\definecolor{cellcolor}{rgb}{0.94, 0.92, 0.84}
\newcommand{\CC}{\mathscr{C}}
\newcommand{\AC}{\mathscr{A}}
\newcommand{\ST}{\mathscr{S}}
\begin{document}
\begin{frontmatter}
  \title{A Branch and Price Algorithm for List Coloring Problem}
	  \author{Mauro Lucci\thanksref{ALL}\thanksref{myemail}}
  \address{Depto. de Cs. de la Computaci\'on, FCEIA, Universidad Nacional de Rosario, Argentina\\
    CONICET, Argentina} 
    \author{Graciela Nasini\thanksref{coemail}}
  \address{Depto. de Matem\'atica, FCEIA, Universidad Nacional de Rosario, Argentina\\
    CONICET, Argentina} 
    \author{Daniel Sever\'in\thanksref{co2email}}
  \address{Depto. de Matem\'atica, FCEIA, Universidad Nacional de Rosario, Argentina\\
    CONICET, Argentina} 
  \thanks[ALL]{This work is supported by grants ANPCyT PICT-2016-0410 and PID-UNR ING538.}
	\thanks[myemail]{Email: \href{mailto:mlucci@fceia.unr.edu.ar} {\texttt{\normalshape mlucci@fceia.unr.edu.ar}}}
	\thanks[coemail]{Email: \href{mailto:nasini@fceia.unr.edu.ar} {\texttt{\normalshape nasini@fceia.unr.edu.ar}}}
	\thanks[co2email]{Email: \href{mailto:daniel@fceia.unr.edu.ar} {\texttt{\normalshape daniel@fceia.unr.edu.ar}}}

\begin{abstract}
Coloring problems in graphs have been used to model a wide range of real applications.
In particular, the List Coloring Problem generalizes the well-known Graph Coloring Problem for which many exact algorithms have been developed. 
In this work, we present a Branch-and-Price algorithm for the weighted version of the List Coloring Problem, based on the one developed by Mehrotra and Trick (1996) for the Graph Coloring Problem.
This version considers non-negative weights associated to each color and it is required to assign a color to each vertex from predetermined lists in such
a way the sum of weights of the assigned colors is minimum.
Computational experiments show the good performance of our approach, being able to comfortably solve instances whose graphs have up to seventy vertices.
These experiences also bring out that the hardness of the instances of the List Coloring Problem does not seem to depend only on quantitative parameters such as
the size of the graph, its density, and the size of list of colors, but also on the distribution of colors present in the lists.
\end{abstract}
\begin{keyword}
List Coloring, Branch and Price, Weighted Problem. 
\end{keyword}
\end{frontmatter}

\section{Introduction} \label{intro}

The Graph Coloring Problem (GCP) models a wide range of planning problems such as timetabling, scheduling, electronic bandwidth allocation and sequencing.
Some applications require to impose additional constraints to the GCP, giving rise to known variants such as Equitable Coloring,
Precoloring Extension, $(\gamma,\mu)$-coloring and List Coloring, see e.g.~\cite{kubale2004,willy2009}.
Actually, List Coloring generalizes GCP, Precoloring Extension and $(\gamma,\mu)$-coloring and has several specific applications such as channel allocation in
wireless networks \cite{wang2005}.
In many practical situations, colors have different weights (or costs) and it is required to find the coloring of minimum weight instead of minimum cardinality.
Particularly, in \cite{ejs2018}, the design of workdays of drivers in a public transport company is modeled as a Minimum Weighted List Coloring Problem (MWLCP). 

Let $G = (V,E)$ be an undirected simple graph and $\CC$ be a set of colors.
A \emph{coloring of} $G$ is a function $f : V \rightarrow \CC$ such that $f(u) \neq f(v)$ for every edge $(u,v)$ of $G$.
Given a coloring $f$ of $G$, the \emph{class color of} $j\in \CC$, denoted by $f^{-1}(j)$, is the set of vertices $v$ colored by $j$, i.e.~such that $f(v)=j$.
Clearly, each class color is a stable set of $G$ and, therefore, any coloring can be thought as a partition of vertices into stable sets.
The \emph{active colors} of a coloring $f$, denoted by $\AC$, are those ones assigned to some vertex,
i.e.~$\AC \doteq \{ j \in \CC : f^{-1}(j) \neq \emptyset\}$.
The GCP consists of finding a coloring of $G$ with minimum cardinality of $\AC$.
This minimum is called \emph{the chromatic number of} $G$, denoted by $\chi(G)$, and it is known that obtaining this number is $\mathcal{NP}$-hard for general graphs.

Despite its hardness, the need of obtaining concrete solutions to numerous applications motivated the development of several exact algorithms for GCP:
combinatorial branch-and-bound such as DSATUR \cite{brelaz1979,sansegundo2012},
branch-and-cut BC-COL \cite{imendez2006} and branch-and-price LPCOLOR \cite{trick1996,held2011,toth2011}.

In the case of BC-COL, an Integer Linear Programming (ILP) compact formulation (\emph{GCP-CF} from now on) based on classic vertex-color assignment variables is used.
Instead, LPCOLOR is based on set-covering ILP formulation (\emph{GCP-SC} from now on), where each variable represents a stable set of $G$.
As the number of variables is usually exponential in the size of $G$, the resolution of the linear relaxation of GCP-SC is addressed by a column generation procedure.\\

In List Coloring, each vertex $v\in V$ has a preassigned \emph{list} $L(v)\subset \CC$ defining those colors that can be assigned to $v$.
Formally, a \emph{list coloring of} $G$ is a coloring $f$ of $G$ with the additional condition that $f(v) \in L(v)$ for all $v\in V$.
Note that classic colorings of $G$ are list colorings for which $L(v)=\CC$ for all $v\in V$.

Given a vector $w\in \mathbb{Z}_+^{\CC}$, the MWLCP consists of finding a list coloring of $G$ such that the sum of weights of the active colors,
i.e.~$\sum_{j \in \AC} w_j$, is minimum.
As colorings of $G$ can be thought as particular cases of list colorings, MWLCP can be seen as a generalization of a \emph{weighted version} of GCP.
However, this weighted version can be trivially reduced to GCP since one can pick the cheapest $\chi(G)$ colors from $\CC$ to obtain the coloring of minimum weight.

Clearly, MWLCP is $\mathcal{NP}$-hard in general graphs since it generalizes GCP.
However, some known results reveal that MWLCP is indeed \emph{harder} than GCP.

As a first remark, unlike GCP, which is feasible whenever $|\CC| \geq \chi(G)$, MWLCP can be infeasible even if $|L(v)|\geq \chi(G)$ for all $v$.
A well-known example is presented in Figure \ref{fig:infeas} where the available colors for each vertex are enclosed in braces.\\
\begin{figure}
	\centering
		\includegraphics[scale=0.2]{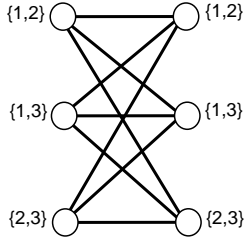}
		\caption{An infeasible instance} \label{fig:infeas}
\end{figure}

Actually, asking whether $G$ has a list coloring is $\mathcal{NP}$-complete even when restricted to instances where the size of lists is at most 3 and $G$ is a cograph or complete bipartite \cite{golovach2017}. In contrast, GCP can be solved on cographs and bipartites in linear time. 

Another example where these problems differ in difficulty is when graphs are parameterized by treewidth.
For a given graph $G$ with fixed treewidth $t$, there is a dynamic programming algorithm that finds $\chi(G)$ in polynomial time, while asking if $G$ has a list
coloring is W[1]-Hard \cite{Computcomplex}.\\

Formulation GCP-CP (resp.~GCP-SC) mentioned above can be extended to the MWLCP in such a way that they coincide when applied to instances of GCP.
In this work, we present these \emph{extended} ILP formulations, respectively called MWLCP-CP and MWLCP-SC, and we develop a Branch-and-Price algorithm to solve MWLCP via
MWLCP-SC, by following some ideas provided in \cite{trick1996}. 
However, we have to take into account some additional issues that arise from the difference between GCP and MWLCP.
In GCP every vertex can be colored with any element of $\CC$ and all colors have the same weight, thus making the colors \emph{indistinguishable} from each other.
Clearly, this property is not met for MWLCP forcing us to consider one for each stable set and each color in MWLCP-SC.
In spite of that, we designed a mechanism to take advantage of those cases where the indistinguishability between colors is partially met.
Feasibility is another issue, as mentioned before.

In Section \ref{ilps}, we formally present the ILP formulations MWLCP-CP and MWLCP-SC. 
In Sections \ref{solvelprel} and \ref{brrule}, we explain the main components that define our Branch-and-Price algorithm: 
the way LP relaxations are solved, how columns are generated and how the node branching is performed.
In Section \ref{compu}, computational experiments are conducted to evaluate the performance of our algorithm over random instances of
different sizes and types. Finally, in Section \ref{conclu}, some conclusions are drawn.

\section{ILP formulations for MWLCP} \label{ilps}

Consider an instance of the MWLCP, given by a graph $G = (V,E)$, a set of colors $\CC$ with weights $w_j\in \mathbb{Z}_+$ for each $j \in \CC$ and lists
$L(v)\subset \CC$ for each vertex $v$. W.l.o.g we can assume that $G$ is connected and $\CC = \bigcup_{v \in V} L(v)$. 

For each color $j$, let $V_j \doteq \{ v \in V : j \in L(v) \}$, $G^j$ be the subgraph of $G$ induced by $V_j$, $I_j$ be the set of isolated vertices in $G^j$ and
$\ST(G^j)$ be the set of maximal stable sets of $G^j$.

The first formulation is based on GCP-CF \cite{imendez2006}. We have a binary variable $x_{vj}$, for each $v \in V$ and $j \in L(v)$, such that its value is 1 if and only if $f(v) = j$, and binary variables $y_j$ such that $y_j=1$ if $j$ is an active color in $f$.
Then, the MWLCP can be formulated as follows:
\begin{align}
\textrm{(MWLCP-CF)}~~ & \min \sum_{j \in \CC} w_j y_j & & \notag \\
  & s.t. & & \notag \\
  & \sum_{j \in L(v)} x_{vj} = 1 & \forall~v \in V & \label{CFRESTR1}\\ 
  & x_{uj} + x_{vj} \leq y_j & \forall~j \in \CC,~(u,v) \in E(G^j) & \label{CFRESTR2} \\
	& x_{vj} \leq y_j & \forall~j \in \CC,~v \in I_j & \label{CFRESTR3} \\
	& x_{vj} \in \{0, 1\} & \forall~v \in V,~j \in L(v) & \notag \\
	& y_j \in \{0, 1\} & \forall~j \in \CC \notag
\end{align}
Constraints \eqref{CFRESTR1} ensure that each vertex is assigned a color, \eqref{CFRESTR2} impose that no pair of adjacent vertices receive the same color (when that color is used), and \eqref{CFRESTR3} control the activation of $y_j$ when $v$ is an isolated vertex in $G^j$.

\medskip

The second ILP model corresponds to a formulation of colorings as covering of vertices by stable sets, based on GCP-SC.
Here, we have a binary variable $x_S^j$ for each $j\in\CC$ and each stable set $S\in \ST(G^j)$.
Observe that, for any color $j$, its color class is a stable set from $G^j$ and, unlike GCP, no more than one maximal stable set of $G^j$ can be used by the covering.
The following ILP model arises naturally:
\begin{align}
~~ & \min \sum_{j \in \CC} w_j \sum_{S \in \ST(G^j)}  x^j_S & & \notag \\
  & s.t. & & \notag \\
  & \sum_{j \in L(v)} \sum_{S \in \ST(G^j) : v \in S} x^j_S \geq 1 & \forall~v \in V & \label{SCRESTR1X}\\ 
  & \sum_{S \in \ST(G^j)} x^j_S \leq 1 & \forall~j \in \CC & \label{SCRESTR2X} \\
	& x^j_S \in \{0, 1\} & \forall~j \in  \CC,~S \in \ST(G^j) & \notag
\end{align}
Constraints \eqref{SCRESTR1X} say that, for each vertex $v$, at least one stable set containing $v$ is selected, while
\eqref{SCRESTR2X} allow to take at most one stable set for each color.

However, this formulation does not properly generalize GCP-SC in the sense that an instance of GCP gives rise to a model with many more
variables than GCP-SC. As we have said in the introduction, GCP-SC takes advantage of the indistinguishability between colors.
In order to overcome this drawback, we say that two colors $j,k \in \CC$ are indistinguishable if $w_j=w_k$ and $G^j=G^k$.
Let $\CC^k = \{ j \in \CC : j, k~\textrm{are indistinguishable} \}$ and $\{ \CC^k : k \in K \}$ be a partition of $\CC$ where $K\subset \CC$.
In other words, each $k \in K$ is a color that \emph{represents} those ones from $\CC^k$.

Now, we can take advantage of the symmetry between those colors and allow the existence of at most $|\CC^k|$ stable sets of $G^k$ instead of a single one.
Moreover, if $|G^k|\leq |\CC^k|$, such constraint can be removed since there is an optimal solution verifying this condition.
Now, the MWLCP can be re-formulated:
\begin{align}
\textrm{(MWLCP-SC)}~~ & \min \sum_{k \in K} w_k \sum_{S \in \ST(G^k)}  x^k_S & & \notag \\
  & s.t. & & \notag \\
  & \sum_{k \in K} \sum_{S \in \ST(G^k) : v \in S} x^k_S \geq 1 & \forall~v \in V & \label{SCRESTR1}\\ 
  & \sum_{S \in \ST(G^k)} x^k_S \leq |\CC^k| & \forall~k \in K : |G^k|\geq |\CC^k|+1 & \label{SCRESTR2} \\
	& x^k_S \in \{0, 1\} & \forall~k \in K,~S \in \ST(G^k) & \notag
\end{align}
Therefore, for instances corresponding to GCP, we have $K=\{1\}$ and $\CC^1 = \CC$.
If, in addition, $|\CC| \geq |V(G)|$, constraints \eqref{SCRESTR2} are no longer needed and our last formulation turns out to be the same as GCP-SC.

\section{Solving LP relaxations of MWLCP-SC} \label{solvelprel}

Given an instance $(G,\CC,w, L)$ of MWLCP, let $X \doteq \{(S,k) : S \in \ST(G^k),~ k \in K \}$.
The variables of the linear relaxation of MWLCP-SC are $x^k_S$ for $(S,k)\in X$ and its constraints can be written as $Ax\geq 1$, $Bx\geq -b$ and $x\geq 0$ where, for
every variable $x^k_S$, the corresponding column in $A$ is the characteristic vector of $S$ and the columns in $B$ have a non-zero entry, $-1$, in the row corresponding
to $k\in K$. Moreover, for each $k\in K$, $b_k=|\CC^k|$.
Note that constraints of the form $x^k_S \leq 1$ are not necessary (non-negative weights ensure that at least one optimal solution verifies these conditions).

For any $\hat X \subset X$, let $x(\hat X)$ be the vector of variables $x^k_S$ restricted to $(S,k) \in \hat X$ and $LP(\hat X)$ be the linear relaxation of
MWLCP-SC restricted to $x(\hat X)$.
We first give some results concerning the integrability of solutions of the LP relaxation.

\begin{lemma} \label{lemita}
Let $x^*$ be an optimal solution of $LP(X)$ and let $\tilde X = \{(S,k) \in X : |S|\geq 2 \}$.
If $x^*(\tilde X)$ is integral then $LP(X)$ has an optimal integer solution.
\end{lemma}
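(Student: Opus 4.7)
The plan is to freeze the integer values $x^*(\tilde X)$ and solve a residual LP over the singleton variables $y_{v,k} := x^k_{\{v\}}$; since $\ST(G^k)$ contains only maximal stable sets, these singletons are exactly those with $v \in I_k$. Substituting $x^*(\tilde X)$ into \eqref{SCRESTR1} yields, at each vertex $v$, a coverage inequality $\sum_{k : v \in I_k} y_{v,k} \geq 1 - \alpha_v$ with $\alpha_v := \sum_{(S,k) \in \tilde X,\, v \in S} x^*_S{}^k \in \mathbb{Z}_+$; substituting into \eqref{SCRESTR2} yields, at each constrained $k$, a capacity inequality $\sum_{v \in I_k} y_{v,k} \leq |\CC^k| - \sum_{S \in \ST(G^k),\, |S| \geq 2} x^*_S{}^k$. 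All right-hand sides are integer, and the singleton coordinates of $x^*$ themselves are a feasible solution of this residual LP.

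The key step is to observe that the residual LP is totally unimodular. After flipping signs on the $\geq$ rows, every column $(v,k)$ carries exactly one $-1$ entry (in the coverage row of $v$) and at most one $+1$ entry (in the capacity row of $k$, when that row appears); this is the arc–node incidence matrix of a bipartite directed graph, which is well known to be TU. Combined with integer right-hand sides, this guarantees an integer optimum $\hat y$ of the residual LP with cost at most $\sum_{(S,k) \notin \tilde X} w_k x^*_S{}^k$. Gluing $\hat y$ to $x^*(\tilde X)$ produces an integer feasible solution of $LP(X)$ whose objective does not exceed that of $x^*$, and is therefore an integer optimum of $LP(X)$.

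The main obstacle is the bookkeeping involved in this reduction. One must check that coverage rows that become trivial (those with $\alpha_v \geq 1$, or those with no associated singleton variable) and capacity rows absent from the original formulation (those with $|G^k| \leq |\CC^k|$) can be dropped without breaking total unimodularity, and that whenever no singleton variable exists for a given $v$, the feasibility of $x^*$ in $LP(X)$ already forces $\alpha_v \geq 1$ through the size-$\geq 2$ contributions. These checks are routine, and with them in hand the TU argument concludes the proof.
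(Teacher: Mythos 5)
Your proposal is correct and follows essentially the same route as the paper: substitute the frozen integral values of the $|S|\geq 2$ variables and observe that the residual LP over the singleton variables has a totally unimodular constraint matrix (one $+1$ per covering row, at most one $-1$ per capacity row) with integer right-hand sides, hence an integer optimum that can be glued back. One small slip that does not affect the argument: a singleton $\{v\}$ belongs to $\ST(G^k)$ when $v$ is adjacent to every other vertex of $G^k$ (or is the only vertex of $G^k$), not when $v \in I_k$; an isolated vertex of a larger $G^k$ yields a non-maximal singleton.
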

\begin{proof}
Consider the partition $\{ X^0, X^{+} \}$ of $X$ where $X^0 \doteq \{ (S, k) : x^{*k}_S = 0 \}$ and $X^{+} \doteq X \setminus X^0$.
Clearly, $x^*(X^{+})$ is an optimal solution for $LP(X^{+})$.
Since $x^*(\tilde X)$ is integral, variables from $\tilde X \cap X^{+}$ are set to 1.
Then, by replacing occurrences of those variables by 1 in $LP(X^{+})$, we obtain a linear program $LP'$ over variables
$X^{+} \setminus \tilde X$, such that $x^*(X^{+} \setminus \tilde X)$ is an optimal solution for $LP'$.
Observe that the stable sets represented by variables in $X^{+} \setminus \tilde X$ are singletons.

Let $A'$ and $B'$ be the submatrices of $A$ and $B$ corresponding to constraints \eqref{SCRESTR1} and \eqref{SCRESTR2} in $LP'$, respectively. 
Since columns of $A'$ are characteristic vectors of singleton stable sets, they have one positive entry equal to 1.
Similarly, columns of $B'$ have at most one negative entry equal to -1, thus implying that the matrix of coefficients of $LP'$ is totally unimodular.
Therefore, there exist integer optimal solutions for $LP'$ and also $LP(X)$.
\end{proof}

Note that when every $G^j$ is complete, stable sets are singletons and $\tilde X = \emptyset$ in the statement of the previous lemma. Then, we obtain:

\begin{corollary} \label{completos}
If, for all $j \in \CC$, $G^j$ is a complete graph, then $LP(X)$ has an integral optimal solution.
\end{corollary}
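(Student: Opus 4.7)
My plan is to derive the corollary directly from Lemma \ref{lemita} by checking that its hypothesis is vacuously satisfied in this setting. The key observation is about the structure of stable sets: if $G^j$ is a complete graph, then any two distinct vertices in $V_j$ are adjacent, so a stable set of $G^j$ can contain at most one vertex. Consequently, every element of $\ST(G^k)$ for $k \in K$ is a singleton, and the set $\tilde X = \{(S,k) \in X : |S| \geq 2\}$ from the lemma is empty.

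With $\tilde X = \emptyset$, any optimal solution $x^*$ of $LP(X)$ satisfies the hypothesis of Lemma \ref{lemita} trivially, since the restriction $x^*(\tilde X)$ is the empty vector, which is integral by vacuity. Applying the lemma then yields the existence of an integer optimal solution for $LP(X)$.

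There is essentially no obstacle here; the only subtlety is making the reduction to singletons explicit, and even that is just a one-line observation about complete graphs. The remark immediately preceding the statement already foreshadows this reasoning, so the proof should be a concise two- or three-sentence invocation of the lemma.
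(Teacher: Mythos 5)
Your proof matches the paper's own argument exactly: the paper derives the corollary from Lemma \ref{lemita} by noting that when every $G^j$ is complete, all stable sets are singletons, so $\tilde X = \emptyset$ and the hypothesis of the lemma holds vacuously. No issues.
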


Actually, the result given in the previous corollary can be also derived from a reduction of MWLCP to 
the Minimum Weighted Perfect Matching Problem on Bipartite Graphs, described below.

Consider an instance $(G,\CC,w, L)$ of MWLCP such that $G^j$ is a complete graph for all $j\in \CC$.
Observe that each color can be assigned to at most one vertex. Then, we can assume that $|V(G)| \leq \CC$ (otherwise, MWLCP would be infeasible). Let $Z$ be a set of dummy elements such that $|Z| = |\CC| - |V(G)|$.

Now, consider the bipartite graph $\tilde G$  with vertex partition defined by $V(G)\cup Z$ and $\CC$, and such that for all $v\in V(G)$ and
$j\in \CC$, $(v,j)$ is an edge of $\tilde G$ if and only if $j\in L(v)$.
The weight of $(v,j)$ is $w_j$.
We also consider edges $(z,j)$ for all $z\in Z$ and $j\in \CC$ with weight zero.
Clearly, if $f$ is a list coloring of $G$, there exists a perfect matching of $\tilde G$ containing edges $(v,f(v))$ for all $v\in V(G)$ and edges $(z,j(z))$ for all $z\in Z$ and some $j(z)$ taken from the non-active colors in $f$.
The weight of this perfect matching coincides with the weight of the list coloring. 
Conversely, for any perfect matching $M$ of $\tilde G$, edges $(v,j)\in M$ for every $v\in V(G)$ define a list coloring of $G$ with the same weight that $M$. Moreover, $\tilde G$ has a perfect matching if and only if MWLCP is feasible.

In practice, this problem can be addressed much faster via the Hungarian Algorithm than by solving the LP relaxation.

\subsection{Column generation}

Let $\hat X\subset X$, $x^*(\hat X)$ be an optimal solution of $LP(\hat X)$, and $(\pi^*, \gamma^*)$ be the optimal dual solution of $LP(\hat X)$,
where $\pi$ and $\gamma$ are the dual variables corresponding to constraints \eqref{SCRESTR1} and \eqref{SCRESTR2}, respectively
(for those $k \in K$ such that $|G^k| \leq |\CC^k|$, we assume $\gamma^*_k = 0$).
Note that, for any $(S,k)\in X$, the reduced cost of $x^k_S$ is $c^k_S \doteq \sum_{v\in S} \pi^*_v - (w_k + \gamma^*_k)$.

We recall that the $x^*$ obtained from $x^*(\hat X)$ by padding zeros, i.e.~$x^*(X \setminus \hat X) = {\bf 0}$, is feasible for
$LP(X)$ but not necessarily optimal.
In order to know that, we have to decide if there exists $(S,k)\in X$ such that $c^k_S > 0$ or, equivalently, to solve the following
auxiliary problem:
\begin{center}
(Aux)~~~ $\exists$~$(S,k) \in X$ \emph{such that} $\sum_{v\in S} \pi^*_v > w_k + \gamma^*_k$?
\end{center}
Clearly, (Aux) is equivalent to solve, for each $k\in K$, the \emph{Maximum Weighted Stable Set Problem} (MWSSP) on $G^k$
with weight $\pi^*_v$ for each $v\in V(G^k)$. 

Despite the MWSSP is $\mathcal{NP}$-hard, the enumerative routine given in \cite{held2011} shows to be reasonably fast when it is
called by the branch-and-price algorithm.
In addition, it is not always necessary to solve the MWSSP to optimality.
One can stop the optimization as soon as a set $S\in \ST(G^k)$ with weight greater that the threshold $T^k \doteq w_k + \gamma^*_k$ is found.

Moreover, if we have $G^k=G^\ell$ and $T^k > T^\ell$ for some $k \neq \ell \in K$, there is no need to run the enumerative routine
for $G^\ell$ since the stable set found for $G^k$ can be reused for $G^\ell$.

Although finding one column is enough for the column generation process, preliminary experiments have shown that incorporating many
columns at the same time decreases the number of iterations of the process and, thus, the overall time.
In our implementation, we search for (if exists) an entering variable per $k \in K$.

Observe that the column generation process needs to start with a set $\hat X\subset X$ for which $LP(\hat X)$ is feasible.
We recall that, unlike GCP, to know if an instance of MWLCP is feasible is an $\mathcal{NP}$-complete problem and the best algorithm that finds a list coloring, whenever it exists, is $O(2^n)n^{O(1)}$ \cite{husfeldt2009}.
In the following section we propose an initialization procedure.

\subsection{Initializing the linear relaxations}   \label{initrel}

Our approach consists in extending the solution space by creating a dummy color per vertex.
Given an instance $(G,\CC,w,L)$ of MWLCP we generate an extended instance $(G,\CC',w',L')$ where $\CC'= \CC \cup \{d_v : v \in V(G)\}$, $w'_j = w_j$ for all $j \in \CC$ and $w'_{d_v} = M$ for all $v\in V(G)$, where $M$ is a \emph{big number} (e.g.~$M=1+ \sum_{j\in \CC} w_j$). In addition, $L'(v) = L(v) \cup \{d_v\}$ for all $v\in V(G)$.

Now, the linear relaxation obtained from the instance $(G,\CC',w',L')$ is trivially feasible since
$x_{\{v\}}^{d_v}=1$ for all $v\in V(G)$ is a feasible solution which allows us to start our column generation process.
That is, $\hat X = \{(\{v\}, d_v): v \in V(G)\}$.

It is expected that dummy colors will no longer be used as optimization proceeds.
In fact, if some dummy color is still active in the optimal solution of the relaxation associated to the extended instance $(G,\CC',w',L')$,
then $(G,\CC,w,L)$ is infeasible.\\

Preliminary experiments have revealed that starting with dummy colors is better than a feasible initial coloring, when the latter can be generated (there are heuristics
algorithms that deliver list colorings, such as $k$-Greedy-List \cite{molloy1997}).
We speculate the cause of this may be that the initial columns yielded by such colorings are of poor quality, thus making the LP solver to perform more iterations.\\

Since our column generation process is embeeded in a Branch-and-Price framework, the subproblems we have to solve for each node of the
B\&B tree should correspond to new instances of the MWLCP.
To achieve that purpose, we need to devise a \emph{robust branching rule}. In the next section we present such a rule,
adapting the one first proposed in \cite{zykov1949} and used in \cite{trick1996} for the GCP.

\section{A robust branching rule}  \label{brrule}

The idea behind the rule given in \cite{trick1996} is to pick two non-adjacent vertices $u$ and $v$ and divide the space of solutions between those ones
satisfying $f(u) = f(v)$ (both share the same color) and those others satisfying $f(u) \neq f(v)$.
Finding the optimal coloring in the latter case is equivalent to solve GCP on a graph obtained by adding edge $(u,v)$ to $G$. In the former case, one has to solve GCP on a graph obtained from $G$ by collapsing vertices $u$ and $v$ into a single one (that is, connecting $u$ to
every vertex from $N_G(u) \cup N_G(v)$ and removing $v$, where $N_G(x)$ is the open neighborhood of $x$ in $G$).

We follow the same strategy, with the additional condition that vertices $u$ and $v$ must satisfy $L(u)\cap L(v)\neq \emptyset$.
Indeed, finding the optimal list coloring $f$ of $(G, \CC, w, L)$ with $f(u) \neq f(v)$ is equivalent to solve MWLCP for the instance
$(G_1, \CC, w, L)$ where $G_1 = (V(G), E(G) \cup \{u,v\})$, whereas finding the optimal $f$ with $f(u) = f(v)$ is equivalent to solve
MWLCP for the instance $(G_2, \CC, w, L_2)$ where $G_2 = G \setminus \{v\}$ and $u$ is connected to every vertex
from $N_G(v)$, $L_2(z) = L(z)$ for all $z \in V(G_2) \setminus \{u\}$ and $L_2(u) = L(u) \cap L(v)$.  

Note that, in a finite number of applications of the previous rule, we get instances where all graphs $G^k$ are completes.
At that point, the optimal solutions of their LP relaxations are integers by Corollary \ref{completos}. 

In addition, due to the intersection performed to obtain $L_2(u)$, it is likely to reach subproblems whose instances have a vertex
with a singleton list. Since such vertex is forced to be colored with the unique available color for it, the instance can be
preprocessed according to the following result:
\begin{lemma} \label{uncolor}
Let $(G, \CC, w, L)$ be an instance of the MWLCP such that $L(u) = \{j\}$ for some $u \in V(G)$ and $j \in \CC$.
Solving $(G, \CC, w, L)$ is equivalent to solve MWLCP for an instance $(G', \CC, w, L')$ where $G' = G \setminus \{u\}$,
$L'(z) = L(z) \setminus \{j\}$ for all $z \in N(u)$ and $L'(z) = L(z)$ for all $z \in V(G') \setminus N(u)$.
\end{lemma}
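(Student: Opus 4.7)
My plan is to exhibit a bijection between the feasible list colorings of $(G,\CC,w,L)$ and those of $(G',\CC,w,L')$, and then to relate their objective values. Since $L(u)=\{j\}$, any list coloring $f$ of the original instance is forced to satisfy $f(u)=j$, which suggests defining the forward map by restriction, $\Phi(f) = f|_{V(G')}$, and the backward map $\Psi$ by $\Psi(f')(u) = j$ and $\Psi(f')(z) = f'(z)$ for every $z \in V(G')$.

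The core combinatorial step is verifying that $\Phi$ and $\Psi$ land in the intended feasible regions. For $\Phi$, properness on $E(G') \subseteq E(G)$ is inherited, and for the list constraint I will split on whether the vertex $z \in V(G')$ is a neighbor of $u$: if $z \in N(u)$, then properness of $f$ in $G$ forces $f(z) \neq f(u) = j$, and combined with $f(z) \in L(z)$ this yields $f(z) \in L(z)\setminus\{j\} = L'(z)$; if $z \notin N(u)$, then $L'(z) = L(z)$ and the membership is automatic. For $\Psi$, properness on $E(G')$ is inherited from $f'$, while on every new edge $(u,z)$ of $G$ (necessarily with $z \in N(u)$) the definition $L'(z) = L(z)\setminus\{j\}$ immediately gives $f'(z) \neq j = \Psi(f')(u)$. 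Once this is in place, $\Phi$ and $\Psi$ are manifestly mutual inverses, so in particular one instance is feasible exactly when the other is.

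To finish the equivalence I relate the objective values. Because $f = \Psi(f')$ always has $j$ among its active colors (via $u$), the active-color set of $f$ is the active-color set of $f'$ together with $\{j\}$. Consequently the weight offset $w(f) - w(f')$ equals $w_j$ when $j$ is not already active in $f'$ and equals $0$ otherwise, so an optimal coloring of $(G,\CC,w,L)$ is recovered by choosing $f'$ that minimizes this adjusted objective over $(G',\CC,w,L')$ and then applying $\Psi$.

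I anticipate that the weight bookkeeping is the main subtlety of the argument: color $j$ may continue to be active in $f'$ through some non-neighbor of $u$ that also uses $j$, so the offset between $w(f)$ and $w(f')$ is not a fixed constant but depends on $f'$ itself. Recognizing this one-bit dependence is what makes the equivalence precise; without it, one could be tempted to claim the stronger statement that the two optimal values differ by exactly $w_j$, which would be false in general.
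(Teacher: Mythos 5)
Your feasibility bijection $\Phi,\Psi$ is exactly the content of the paper's (two-sentence) proof, and that part is correct and carefully argued. The genuine gap is in your last step, and it is a gap you half-noticed and then stepped around. The lemma asserts that solving $(G,\CC,w,L)$ is equivalent to solving the \emph{MWLCP} on $(G',\CC,w,L')$, i.e.\ to minimizing the plain objective $\sum_{i\in\AC(f')}w_i$; the paper's proof claims outright that extending an optimal $f'$ by $f(u)=j$ yields an optimal $f$. Your own bookkeeping shows why this does not follow: since $w(\Psi(f'))=w(f')+w_j$ when $j\notin\AC(f')$ and $w(\Psi(f'))=w(f')$ otherwise, an $f'$ of strictly smaller weight can extend to an $f$ of strictly larger weight. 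Minimizing your ``adjusted objective'' over the reduced instance is a \emph{different} optimization problem from the MWLCP on $(G',\CC,w,L')$, so as written your argument does not prove the stated equivalence -- and no argument can, because the statement as the paper proves it is false. Take $V(G)=\{u,v,z\}$, $E(G)=\{(u,v),(v,z)\}$, $L(u)=\{j\}$, $L(v)=\{k,l\}$, $L(z)=\{j,m\}$, $w_j=10$, $w_k=w_l=w_m=1$. The original optimum is $11$ (color $z$ with $j$), but every optimum of the reduced instance colors $z$ with $m$ (value $2$) and extends to a coloring of weight $12$. Even with all weights equal to $1$, an arbitrary reduced optimum can extend to a suboptimal coloring.

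The repair, which would also complete your proof cleanly, is to change the weight vector in the reduced instance to $w'$ with $w'_j=0$ and $w'_i=w_i$ for $i\neq j$. Since $\AC(\Psi(f'))=\AC(f')\cup\{j\}$, one then has $w(\Psi(f'))=w'(f')+w_j$ \emph{identically}, a constant offset, so $\Psi$ carries optima to optima and the two problems are genuinely equivalent. Alternatively, the lemma is correct as stated under the extra hypothesis that $j\notin L(z)$ for every $z\in V(G')\setminus N(u)$ (then $j$ can never be active in $f'$ and your offset is constantly $w_j$); this hypothesis holds in particular for instances arising from GCP-style data where lists are uniform, but not in general. You should either adopt one of these fixes or state explicitly that your proof establishes equivalence with the adjusted objective rather than with the MWLCP on $(G',\CC,w,L')$.
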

\begin{proof}
If $f'$ is an optimal list coloring of $(G', \CC, w, L)$ then $f(u) = j$, $f(z) = f'(z)$ for all $z \in V(G')$ is an optimal list coloring
of $(G, \CC, w, L)$. Instead, if $(G', \CC, w, L)$ is infeasible, $(G, \CC, w, L)$ is also infeasible.
\end{proof}

It only remains to specify a criterion for selecting vertices $u$ and $v$, which we address below.

\subsection{Branching variable selection strategy} \label{criterion}

In \cite{trick1996}, a simple but efficient rule is proposed.
We explain it briefly in terms of our formulation for the case $K=\{1\}$ (superscripts are omitted).
Assume that $x^*$ is a non-integer optimal solution of the LP relaxation.
The authors first search for the \emph{most fractional variable}, i.e.~an $x^*_{S_1}$ that minimizes $|x^*_{S_1} - \frac{1}{2}|$, and pick $u \in S_1$.
Since $x^*_{S_1} \notin \{0,1\}$ and \eqref{SCRESTR1} holds for $u$, there exists another $S_2$ such that $x^*_{S_2} > 0$ and $u \in S_2$.
Then, they search for the first vertex $v$ such that $v \in S_1 \Delta S_2$ and, due to the fact that stable sets are different from each other,
such $v$ exists and it is not adjacent to $u$.

In our case, we search for the most fractional $x^{*k}_{S_1}$ satisfying $|S_1| \geq 2$ and we pick $u \in S_1$.
Lemma \ref{lemita} guarantees the existence of such stable set (otherwise, the solution would be integral).
Now we need a pair $(S_2, \ell)$ different from $(S_1, k)$ such that $u\in S_2$ and $x_{S_2}^{*\ell}>0$.
If such $(S_2, \ell)$ exists, we pick $v \in S_2 \setminus S_1$. Otherwise, we pick $v \in S_1\setminus\{u\}$.
In this way, we can assert that $u$ and $v$ are non-adjacent and satisfy the pre-condition $L(u)\cap L(v) \neq \emptyset$ since they must have colors $k$ or $\ell$
in their lists.

\section{Computational results} \label{compu}

This section is devoted to analyze the performance of the proposed Branch-and-Price algorithm
over randomly generated instances, with number of vertices $n\in \{50,60,70\}$ and edge probability $p\in \{0.25,0.50,0.75\}$.

Cardinality of $\CC$ and distribution of colors in the lists are ruled by two parameters $c\in \{0.5,1.0,1.5\}$ and $q\in \{0.25,0.50,0.75\}$.
The former is used to set $\CC = \{1,\ldots,\left \lfloor{cn}\right \rfloor \}$.
The latter, which we refer to as \emph{membership-to-list probability}, is the probability that a color $j \in \CC$ belongs to $L(v)$, for each $v$.
Weights are set to 1 for all colors.
Random numbers are yielded by an uniform distribution.

The experiment consists of comparing two approaches, which we call CPLEX-CF and BP-SC. In the former, the ILP solver of CPLEX (with default parameters) solve the
MWLCP-CF formulation. The latter is our Branch-and-Price algorithm, which was manually implemented in \texttt{C++} and uses CPLEX only for the resolution of LP relaxations.
In order to speed up the generation of subproblems, we copy and perform minimum changes to the data structure where the instance lies.
This saves time since it is not necessary to re-compute the partition of the set of colors nor the subgraphs $G^k$ among them from scratch.
Lemma \ref{uncolor} is applied during this stage.
Then, subproblems are solved in a depth-first search fashion.
Regarding the pricing routine, we traverse graphs $G^k$ according to the value $T^k$ in decreasing order, in order to avoid solving the MWSSP over the same graph more than once.
On the other hand, our current implementation does not use the Hungarian Algorithm for the relaxations related to Corollary \ref{completos}.
This feature will be incorporated in a future version.

The experiment is carried out by a desktop computer equipped with an AMD Phenom II X4 3.4 GHz (a single thread is used), 3.6 GB of memory, Ubuntu 16.04
operating system, GCC 5.4.0 and IBM ILOG CPLEX 12.7 as the integer and linear programming solver.
A time limit of 1 hour is imposed on solving each instance.

Results are summarized in Table \ref{tab:my_label}.
In each row, averages over 5 instances generated with the same combination of values $(n, p, c, q)$ are reported:
averages of CPU time (in seconds) for CPLEX-CF and BP-SC with best times highlighted, and average of number of nodes explored by BP-SC.
Only instances solved to optimality within the time limit are considered in the averages.
For those cases when at least one of the 5 instances is not solved, the number of solved ones is reported in brackets.
If none of them is solved, a symbol ``--'' is displayed.

\begin{table} 
\footnotesize
\centering
\def\arraystretch{1.2}
\hspace{20pt}
\begin{tabularx}{\textwidth}{cccccccccccc}
&&&\multicolumn{3}{c}{$q$ = 0.25}&\multicolumn{3}{c}{$q$ = 0.50}&\multicolumn{3}{c}{$q$ = 0.75} \\ 
&&&\multicolumn{2}{c}{SC}&CF&\multicolumn{2}{c}{SC}&CF&\multicolumn{2}{c}{SC}&CF \\ 
$n$&$p$&$c$&nodes&time&time&nodes&time&time&nodes&time&time \\
\cmidrule{1-12}
\multirow{9}{*}{50}&\multirow{3}{*}{0.25}&0.5&47&0.3&0.3&233&\cellcolor{cellcolor}8.5&22.0&55&\cellcolor{cellcolor}2.9&1235.9 \\[-1pt]
&&1.0&475&5.4&\cellcolor{cellcolor}1.2&792&\cellcolor{cellcolor}33.6&660.3&53&\cellcolor{cellcolor}4.2&-- \\[-1pt]
&&1.5&88&\cellcolor{cellcolor}1.3&4.2&87&\cellcolor{cellcolor}3.3&454.8&56&\cellcolor{cellcolor}5.2&-- \\ 
&\multirow{3}{*}{0.50}&0.5&46&\cellcolor{cellcolor}0.4&0.6&159&\cellcolor{cellcolor}4.2&400.9&50&\cellcolor{cellcolor}2.1&-- \\[-1pt]
&&1.0&57&\cellcolor{cellcolor}0.7&2.5&238&\cellcolor{cellcolor}8.1&507.8&45&\cellcolor{cellcolor}2.9&-- \\[-1pt]
&&1.5&211&\cellcolor{cellcolor}3.6&6.2&63&\cellcolor{cellcolor}2.5&623.4&37&\cellcolor{cellcolor}3.5&-- \\ 
&\multirow{3}{*}{0.75}&0.5&39&\cellcolor{cellcolor}0.3&0.5&26&\cellcolor{cellcolor}0.5&27.0&16&\cellcolor{cellcolor}0.6&-- \\[-1pt]
&&1.0&22&\cellcolor{cellcolor}0.2&0.4&23&\cellcolor{cellcolor}0.6&53.1&13&\cellcolor{cellcolor}0.8&-- \\[-1pt]
&&1.5&30&\cellcolor{cellcolor}0.4&0.8&19&\cellcolor{cellcolor}0.7&64.1&15&\cellcolor{cellcolor}1.3&-- \\
\cmidrule{1-12}
\multirow{9}{*}{60}&\multirow{3}{*}{0.25}&0.5&107&1.4&\cellcolor{cellcolor}0.8&230&\cellcolor{cellcolor}8.6&106.0&895&\cellcolor{cellcolor}143.7&-- \\[-1pt]
&&1.0&627&14.1&\cellcolor{cellcolor}8.0&95&\cellcolor{cellcolor}7.1&789.9&184&\cellcolor{cellcolor}28.6&-- \\[-1pt]
&&1.5&261&\cellcolor{cellcolor}8.7&30.1&631&\cellcolor{cellcolor}102.3&-&726&\cellcolor{cellcolor}177.6&-- \\ 
&\multirow{3}{*}{0.50}&0.5&43&\cellcolor{cellcolor}0.7&1.6&132&\cellcolor{cellcolor}4.9&596.5(2)&487&\cellcolor{cellcolor}49.2&-- \\[-1pt]
&&1.0&249&\cellcolor{cellcolor}5.5&25.9&71&\cellcolor{cellcolor}4.1&-&457&\cellcolor{cellcolor}68.6&-- \\[-1pt]
&&1.5&113&\cellcolor{cellcolor}3.4&34.3&121&\cellcolor{cellcolor}10.9&-&408&\cellcolor{cellcolor}86.1&-- \\ 
&\multirow{3}{*}{0.75}&0.5&51&\cellcolor{cellcolor}0.6&0.9&33&\cellcolor{cellcolor}1.3&1686.6(3) &35&\cellcolor{cellcolor}2.3&-- \\[-1pt]
&&1.0&43&\cellcolor{cellcolor}0.8&3.1&34&\cellcolor{cellcolor}1.8&-&40&\cellcolor{cellcolor}3.9&-- \\[-1pt]
&&1.5&32&\cellcolor{cellcolor}0.8&2.9&45&\cellcolor{cellcolor}2.5&-&33&\cellcolor{cellcolor}4.7&-- \\ 
\cmidrule{1-12}
\multirow{9}{*}{70}&\multirow{3}{*}{0.25}&0.5&8467&223.6&\cellcolor{cellcolor}4.6&388&\cellcolor{cellcolor}55.8&1154.8(3)&71&\cellcolor{cellcolor}19.1&--\\[-1pt]
&&1.0&2669&\cellcolor{cellcolor}68.5&88.7&3167&\cellcolor{cellcolor}590.0(2)&--&95&\cellcolor{cellcolor}32.6&--\\[-1pt]
&&1.5&4814&\cellcolor{cellcolor}281.7(3)&894.0&1853&\cellcolor{cellcolor}230.5(3)&--&101&\cellcolor{cellcolor}43.0&--\\ 
&\multirow{3}{*}{0.50}&0.5&523&\cellcolor{cellcolor}14.7&82.3&2333&\cellcolor{cellcolor}188.5&--&166&\cellcolor{cellcolor}34.1&--\\[-1pt]
&&1.0&1506&\cellcolor{cellcolor}59.3&632.6(4)&200&\cellcolor{cellcolor}14.9&--&153&\cellcolor{cellcolor}50.4&--\\[-1pt]
&&1.5&661&\cellcolor{cellcolor}26.1&1703.6&1506&\cellcolor{cellcolor}11.0&--&161&\cellcolor{cellcolor}71.3&--\\ 
&\multirow{3}{*}{0.75}&0.5&59&\cellcolor{cellcolor}1.2&4.3&68&\cellcolor{cellcolor}3.2&--&45&\cellcolor{cellcolor}5.0&--\\[-1pt]
&&1.0&64&\cellcolor{cellcolor}1.7&30.7&40&\cellcolor{cellcolor}3.3&--&45&\cellcolor{cellcolor}8.3&--\\[-1pt]
&&1.5&77&\cellcolor{cellcolor}2.6&94.5&52&\cellcolor{cellcolor}4.9&--&46&\cellcolor{cellcolor}11.8&--
\end{tabularx}
\caption{Results on random graphs (average of 5 instances per row).}
\label{tab:my_label}
\end{table}

As we can see from the table, BP-SC achieves a remarkable performance, being able to solve to optimality 98\% of the instances (397 of 405).
In particular, all instances with 50 and 60 vertices can be solved in a couple of minutes.
The same behavior is observed when increasing the number of vertices to 70 and considering an edge probability of 0.5 or 0.75 (medium to high density).
In contrast, instances with edge probabilities of 0.25 (low density) are occasionally harder to solve, except when membership-to-list probability is high.

Note that the number of explored nodes is low, pointing out that the linear relaxations provide tight bounds and the node selection strategy is enough good,
but also the resolution of each relaxation is very time consuming.

Regarding CPLEX-CF, we notice that it outperforms BP-SC in barely 5\% of the cases.
These cases usually have low density graphs and lists with a small amount of colors, and makes MWLCP-CF become smaller in number of variables and constraints.
On the other hand, the lower the density of the graph, the harder the resolution of the MWSSP on graphs $G^k$ (whose tend to be sparse).

Last but not least, both approaches evidence perturbations in the performance not only when varying number of vertices and density but also when changing the structure
of the lists of colors. In the case of BP-SC, this behavior seems to be unpredictable. Even when we fix the number of vertices, edge probability and membership-to-list probability, the performance does not always get worse as the number of colors increases.

\section{Conclusions} \label{conclu}

In this work, we present a Branch-and-Price algorithm to solve the weighted version of the List Coloring Problem, which has many applications and generalizes several
other coloring problems.
In order to achieve that, we propose an ILP formulation, a column generation process and a robust branching scheme based on the ones given in \cite{trick1996} for the GCP. 
In particular, when restricting to instances coming from GCP, our approach behaves just like LPCOLOR.
However, other aspects should have been taken into account, such as the feasibility and the lack of indistinguishability between the colors.
As far as we are concerned, this is the first exact algorithm based on ILP for this problem.

The computational experiments show that our algorithm has a remarkable performance, being able to solve to optimality randomly generated instances up to 70 vertices and
within a time limit of 1 hour. Our results also expose that the hardness of the instances does not seem to depend only on the generation parameters ($n$, $p$, $c$ and $q$),
but also on the distribution of colors present in the lists.
In this respect, we believe that further research still remains to be done.
Particularly, it should be studied how the structure of graphs $G^k$, i.e.~size, density and overlap (number of vertices they share each other),
influence the performance of the Branch-and-Price algorithm.
A better understanding of tough instances can help to detect improvement opportunities in our algorithm. 

In future works, we plan to evaluate the performance of our algorithm over other types of instances. For example, by adding different weights to colors, varying
the number of indistinguishable colors, using instances coming from applications (see e.g.~\cite{ejs2018}) and other coloring problems (e.g.~Precoloring Extension,
$(\gamma,\mu)$-coloring) or by creating instances from benchmark graphs such as the ones from COLORLIB (DIMACS).


\begin{thebibliography}{10}\label{bibliography}

\bibitem{kubale2004}
Kubale, M.,
\emph{Graph Colorings},
American Mathematical Society (2004).

\bibitem{willy2009}
Bonomo, F., G. Dur\'an, J. Marenco,
\emph{Exploring the complexity boundary between coloring and list-coloring},
Ann. Oper. Res. \textbf{169} (2009), 3--16.

\bibitem{wang2005}
Wang, W., and X. Liu,
\emph{List-coloring based channel allocation for open-spectrum wireless networks},
IEEE 62nd Vehicular Technology Conference, Dallas, TX, USA (2005), 690--694.

\bibitem{ejs2018}
Lucci, M., G. Nasini, and D. Sever\'in,
\emph{Planning the Workday of Bus Drivers by a Graph List-Coloring Model},
Electronic Journal SADIO \textbf{17} (2018), 77--91.

\bibitem{brelaz1979}
Br\'elaz, D.,
\emph{New Methods to Color the Vertices of a Graph},
Commun. ACM \textbf{22} (1979), 251--256. 

\bibitem{sansegundo2012}
San Segundo, P.,
\emph{A new DSATUR-based algorithm for exact vertex coloring},
Comp. Oper. Res. \textbf{39} (2012), 1724--1733.

\bibitem{imendez2006}
M\'endez-D\'iaz, I., and P. Zabala,
\emph{A cutting plane algorithm for graph coloring},
Discr. Appl. Math. \textbf{156} (2008), 159--179.

\bibitem{trick1996}
Mehrotra, A., and M. Trick,
\emph{A Column Generation Approach for Graph Coloring},
INFORMS J. Comput. \textbf{8} (1996), 331--437.

\bibitem{held2011}
Held, S., E. Sewell, and W. Cook,
\emph{Safe Lower Bounds For Graph Coloring},
Lect. Notes Comput. Sc. \textbf{6655}, 261--273.

\bibitem{toth2011}
Malaguti, E., M. Monaci, and P. Toth,
\emph{An exact approach for the Vertex Coloring Problem},
Discr. Optim. \textbf{8} (2011), 174--190.

\bibitem{golovach2017}
Golovach, P., M. Johnson, D. Paulusma, and J. Song,
\emph{A Survey on the Computational Complexity of Coloring Graphs with Forbidden Subgraphs},
J. Graph Theory \textbf{84} (2017), 331--363.

\bibitem{Computcomplex}
Cygan, M., F. Fomin, L. Kowalik, D. Lokshtanov, D. Marx, M. Pilipczuk, M. Pilipczuk, and S. Saurabh,
\emph{Parameterized Algorithms},
Springer International Publishing (2015).

\bibitem{husfeldt2009}
Bjorklund, A., T. Husfeldt, and Mikko Koivisto,
\emph{Set partitioning via inclusion-exclusion},
SIAM Journal Comput. \textbf{39} (2009), 546--563. 

\bibitem{molloy1997}
Achlioptas, D., and M. Molloy,
\emph{The analysis of a list-coloring algorithm on a random graph},
38th Annual Symposium on Foundations of Computer Science, Miami, FL, USA (1997), 204--212.

\bibitem{zykov1949}
Zykov, A. A.,
\emph{On some properties of linear complexes},
Matematicheskij Sbornik \textbf{24} (1949), 163--188.

\end{thebibliography}
\end{document}